\documentclass[letterpaper, 10 pt, conference]{ieeeconf}  

\usepackage[utf8]{inputenc}
\usepackage{hyperref}       
\usepackage{url}            
\usepackage{booktabs}       
\usepackage{amsfonts}       
\usepackage{nicefrac}       
\usepackage{microtype}      
\usepackage{bbold}
\usepackage{capt-of}
\usepackage{tikz}

\usepackage[center]{caption}

\usepackage{amsmath, 
amssymb, graphicx, cite, epsfig,epstopdf,color,soul}
\usepackage{tabularx}
\usepackage[ruled,vlined]{algorithm2e}
\allowdisplaybreaks
\usepackage{color}
\usepackage{enumerate}
\usepackage{multicol}
\usepackage{empheq}
\usepackage{caption} 

\newcommand{\Rset}{\mathbb{R}}

\newtheorem{thm}{\textbf{Theorem}}

\newtheorem{remark}{\textbf{Remark}}

\newtheorem{theorem}{Theorem}[section]

\newtheorem{lemma}[theorem]{Lemma}

\newtheorem{assumption}{Assumption}

\IEEEoverridecommandlockouts                              

\title{\Large \bf Finite-Time Analysis of Projected Two-Time-Scale Stochastic Approximation
}

\author{Yitao Bai, Thinh T. Doan, Justin Romberg
\thanks{This work was supported in part by NSF under CAREER Award 2527059 and Grant CIF 2527044.}
\thanks{Yitao Bai and Thinh T. Doan are with Department of Aerospace Engineering,
        University of Texas at Austin
        {\tt\small Email: yb5429@my.utexas.edu, thinhdoan@utexas.edu}}%
\thanks{Justin Romberg is with Department of Electrical and Computer Engineering,
        Georgia Tech
        {\tt\small Email: jrom@ece.gatech.edu}}%
}

\begin{document}

\maketitle
\vspace{-0.4in}
\thispagestyle{empty}
\pagestyle{empty}

\begin{abstract}
We study the finite-time convergence of projected linear two-time-scale stochastic approximation with constant step sizes and Polyak--Ruppert averaging. We establish an explicit mean-square error bound, decomposing it into two interpretable components, an approximation error determined by the constrained subspace and  a statistical error decaying at a sublinear rate, with constants expressed through restricted stability margins and a coupling invertibility condition. These constants cleanly separate the effect of subspace choice (approximation errors) from the effect of the averaging horizon (statistical errors). We illustrate our theoretical results through a number of numerical experiments on both synthetic and reinforcement learning problems.

\end{abstract}
\vspace{-0.05in}

\section{Introduction}
\vspace{-0.05in}
Stochastic approximation (SA) is a foundational tool
for solving root-finding and optimization problems from
noisy observations, with applications spanning signal
processing, control, and machine learning
\cite{KushnerYin03}.
Two-time-scale SA (TTSA) generalizes the
single-variable setting to coupled systems where two
groups of variables are updated on different time
scales---one ``fast'' and one ``slow''---and has found
broad use in actor--critic reinforcement learning
\cite{KondaTsitsiklis04}, gradient temporal-difference
(GTD) methods for policy evaluation
\cite{SuttonEtAl09GTD}, primal--dual constrained
optimization \cite{NedicOzdaglar2009SaddlePoint},
and bilevel learning
\cite{HongWaiWangYang2020TTBilevel}.

In many of these applications, the ambient dimension of
the iterates is too large for direct computation, and
one must restrict the variables to low-dimensional
linear subspaces \cite{TsitsiklisVanRoy97,BertsekasYu09}.
A prominent example is policy evaluation in
reinforcement learning: in the tabular setting, GTD
\cite{SuttonEtAl09GTD} is a two-time-scale SA
operating on the full value vector
$V^\pi\in\mathbb R^{|\mathcal S|}$ and an auxiliary
vector $u\in\mathbb R^{|\mathcal S|}$; when the state
space is large, one imposes the linear function
approximation $V^\pi\approx\Phi\theta$
\cite{TsitsiklisVanRoy97} by projecting onto the
feature subspace $\mathrm{span}(\Phi)$.
This projection reduces dimensionality but introduces
an approximation bias: the constrained solution generally
differs from the true one, and the subspace dimension
must be traded off against the resulting approximation
error \cite{MouPanWai23}.
Understanding this approximation--estimation tradeoff
quantitatively is essential for principled subspace
selection.

\noindent\textbf{Main Contributions.}
In this paper, we study finite-time convergence properties of the projected linear TTSA methods with constant step sizes and Polyak--Ruppert (PR) averaging.
We establish an explicit mean-square error bound for the
PR averages, where we decompose the error into two terms:
(i)~an \textbf{approximation component} determined by
the subspace distances
$\inf_{x\in\mathcal X}\|x^*-x\|$ and
$\inf_{y\in\mathcal Y}\|y^*-y\|$; and (ii)~a
\textbf{statistical component} decaying at rate $O(1/T)$,
where $T$ is the number of iterations (see Figure~\ref{fig:decomposition}).
A key feature of our analysis is the use of constant
step sizes, which yield exact telescoping identities
and produce a clean decomposition whose constants are
interpretable through restricted stability margins and
a coupling invertibility condition.
We illustrate our theoretical results though a number of numerical experiments on both synthetic and reinforcement learning problems. Our simulations confirm the bias--variance decomposition and the decaying rate $O(1/T)$ of statistical errors studied in our theoretical results.
\vspace{-0.1in}

\begin{figure}[t]
\centering\includegraphics[width=0.85\linewidth]{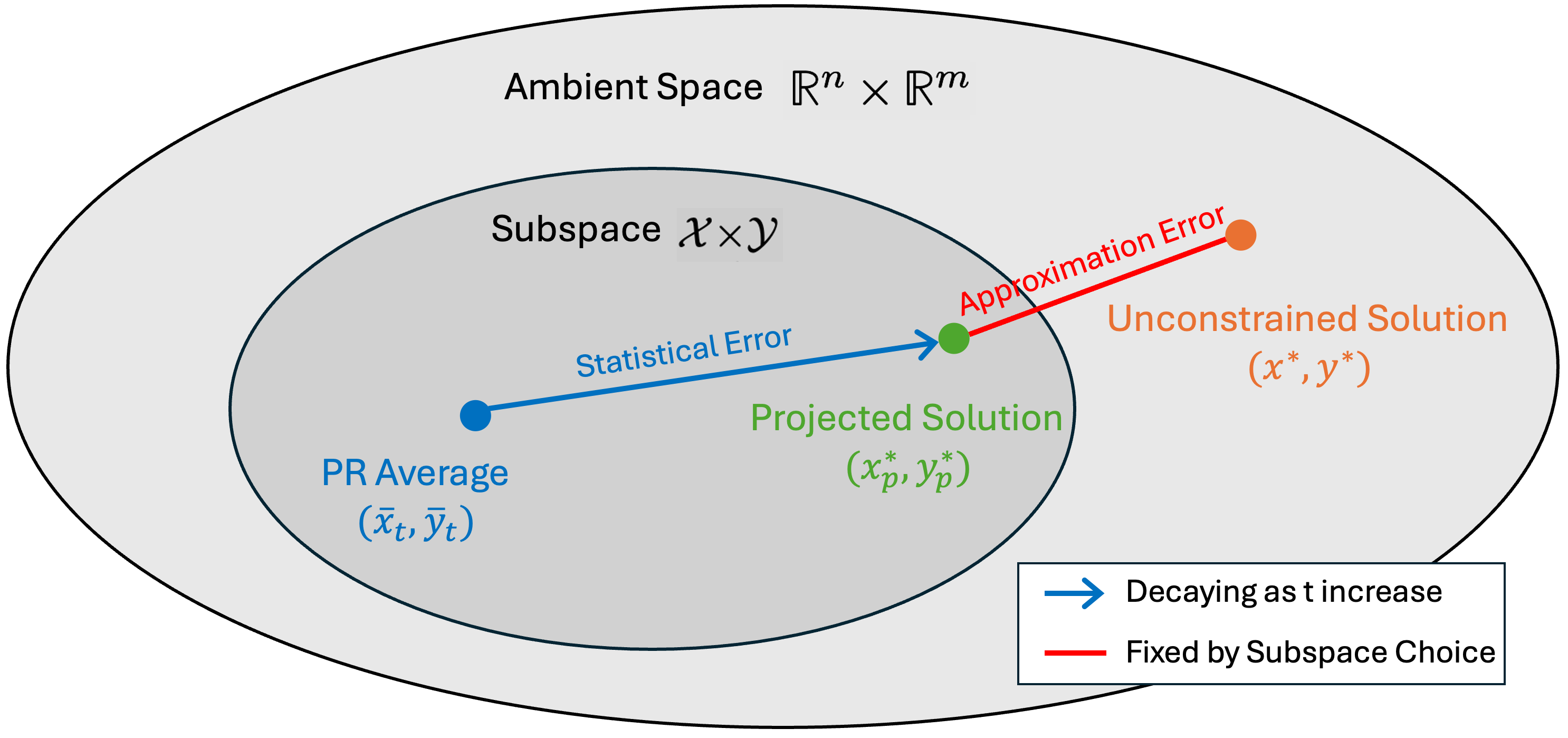}
  \vspace{-0.05in}
  \caption{Error decomposition of projected TTSA.}
  \label{fig:decomposition}
  \vspace{-0.25in}
\end{figure}

\subsection{Related Work}
\vspace{-0.05in}

\noindent\textbf{Projected single-time-scale SA.} Given its broad applicability, SA has been studied extensively in the literature~\cite{Borkar97,KushnerYin03}. The projected variant of SA has been studied in the context of approximate dynamic programming~\cite{TsitsiklisVanRoy97,BertsekasYu09,Bertsekas11}. In~\cite{MouPanWai23}, the authors establish oracle inequalities for PR averaging of the classical single-time-scale SA, obtaining a bias--variance decomposition that matches instance-dependent lower bounds. Our work, by contrast, studies the two-time-scale SA, which requires a fundamentally different analysis.

\noindent\textbf{Two-time-scale SA and PR averaging.}
The asymptotic convergence of TTSA is well understood via ODE methods~\cite{Borkar97,KushnerYin03}, with asymptotic rates established in~\cite{KondaTsitsiklis04}. Finite-time results have been obtained under various settings: \cite{Dalal18a} studies lock-in probability and projected TTSA for GTD; \cite{DoanRomberg20a} considers distributed TTSA; \cite{Kaledin20a} handles Markovian noise; and \cite{kong2025ttsa} establishes a nonasymptotic CLT. Separately, PR averaging~\cite{Ruppert88,Polyak90,PolyakJuditsky92} is known to improve the statistical efficiency of SA, with sharp single-time-scale results appearing in~\cite{MouLiWaiBarJor20}. However, none of these works combine PR averaging with subspace-constrained TTSA or provide an explicit bias--variance decomposition for the projected setting. This work addresses this gap by providing a finite-time theoretical analysis of PR averaging for iterates generated by projected TTSA.
  
\section{Projected Two-Time-Scale Stochastic Approximation} \label{sec:problem_formulation}

\vspace{-0.05in}
Two-time-scale stochastic approximation (TTSA) is an iterative method to find a pair $(x^*,y^*) \in \Rset^{n}\times\Rset^{m}$ satisfying  

\vspace{-0.05in}\noindent 
\begin{align}
\left\{ \begin{array}{l}
    g(x^*,y^*) =   A_{ff}\,x^* + A_{fs}\,y^*  - b_1 = 0, \\
h(x^*,y^*) =  A_{sf}\,x^* + A_{ss}\,y^* - b_2 = 0.
\end{array}\right. \label{eq:root}
\end{align}

\vspace{-0.05in}\noindent In particular, TTSA iteratively updates $(x_{k},y_{k})$, an estimate of $(x^*,y^*)$, using noisy samples of the system matrices as 
\vspace{-0.05in}\noindent 
\begin{align*}
    x_{t+1} &= x_{t} + \alpha(g(x_{t},y_{t}) + \epsilon_{t}),\\
    y_{t+1} &= x_{t} + \beta(g(x_{t},y_{t}) + \psi_{t}),
\end{align*}

\vspace{-0.05in}\noindent where $\beta\ll\alpha$ are two step sizes and the random variables $\epsilon,\psi$ represent sampling noises. As $\beta\ll\alpha$, $x$ is referred to as the fast variable while $y$ is called the slow variable \cite{Borkar97,KondaTsitsiklis04}.   

In this paper, we consider the setting where $(x,y)$ are constrained in some linear subspaces $(\mathcal{X},\mathcal{Y}) \subseteq \mathbb{R}^{d}\times\mathbb{R}^{r}$, where $r,d$ are much smaller than $m,n$. Our focus is to study a projected version of TTSA given as
\vspace{-0.05in}\noindent 
\begin{align}
    \begin{aligned}
        x_{t+1} &= \Pi_{\mathcal{X}}\left[x_{t} + \alpha(g(x_{t},y_{t}) + \epsilon_{t})\right],\\
    y_{t+1} &= \Pi_{\mathcal{Y}}\left[x_{t} + \beta(g(x_{t},y_{t}) + \psi_{t})\right],
    \end{aligned}\label{eq:alg}
\end{align}

\vspace{-0.05in}\noindent where $\Pi_{\mathcal{X}},\Pi_{\mathcal{Y}}$ denote the projections to the sets $\mathcal{X},\mathcal{Y}$, respectively. As $\mathcal{X},\mathcal{Y}$ are linear, we can obtain a closed form representation of these projection operators: let $U_{\mathcal X}\in\mathbb{R}^{n\times d}$ and
$U_{\mathcal Y}\in\mathbb{R}^{m\times r}$ have
orthonormal columns spanning $\mathcal X$ and
$\mathcal Y$, then the projection operators are defined as
\vspace{-0.05in}
\begin{equation*}
  \Pi_{\mathcal X}
    := U_{\mathcal X}\,U_{\mathcal X}^\top\quad\text{and}
  \quad
  \Pi_{\mathcal Y}
    := U_{\mathcal Y}\,U_{\mathcal Y}^\top.\vspace{-0.05in}
\end{equation*}

\vspace{-0.03in}While the constrained variant of TTSA arises naturally in many applications, including primal--dual constrained optimization \cite{NedicOzdaglar2009SaddlePoint}, bilevel learning \cite{HongWaiWangYang2020TTBilevel}, and low-rank parameterization in large-scale systems \cite{TsitsiklisVanRoy97}, our main motivation is the application of \eqref{eq:alg} to reinforcement learning, in particular, to study the gradient temporal-difference learning with linear function approximation (see Section~\ref{sec:simulation} for a detailed presentation).  

\noindent\textbf{Main focus.} The main objective of this paper is to study the convergence properties of the following PR averages of the iterates generated by \eqref{eq:alg} \vspace{-0.05in}\noindent 
\begin{align}
    \begin{aligned}
  \bar x_T := \frac{1}{T}\sum_{t=0}^{T-1} x_t,
  \quad
  \bar y_T := \frac{1}{T}\sum_{t=0}^{T-1} y_t.
\end{aligned}\label{eq:pr}
\end{align}

\vspace{-0.05in}\noindent PR averaging improves the statistical efficiency of constant-step-size SA by canceling the zero-mean fluctuations of the iterates around their limit, achieving an $O(1/T)$ mean-square error rate \cite{Polyak90,PolyakJuditsky92,MouLiWaiBarJor20}.

Note that due to the projections, one would not expect the sequence $(x_{t},y_{t})$ generated by \eqref{eq:alg}, if they converge, to reach $(x^*,y^*)$. Indeed, the iterates are confined to $\mathcal X\times\mathcal Y$, and the best one can hope for is convergence to the \emph{constrained solution} $(x_p^*,y_p^*)\in\mathcal X\times\mathcal Y$, defined as the unique pair satisfying the projected equations $\Pi_{\mathcal X}(A_{ff}x_p^* + A_{fs}y_p^* - b_1) = 0$ and $\Pi_{\mathcal Y}(A_{sf}x_p^* + A_{ss}y_p^* - b_2) = 0$. When $(x^*,y^*)\notin\mathcal X\times\mathcal Y$, the constrained solution differs from the true one, and the total error of the PR averages decomposes as
\vspace{-0.05in}
\begin{equation*}
\mathbb E\|\bar x_T - x^*\|^2
  \;\le\;
  \underbrace{2\|x_p^* - x^*\|^2}_{\text{approximation error}}
  \;+\;
  \underbrace{2\,\mathbb E\|\bar x_T - x_p^*\|^2}_{\text{statistical error}},
\end{equation*}

\vspace{-0.05in}\noindent and similarly for $y$. The \emph{approximation error} is the irreducible distance between the constrained and unconstrained solutions, determined entirely by the choice of subspaces. The \emph{statistical error} measures how well the PR averages track the constrained solution, and decays as $O(1/T)$. In the next subsection, we state the technical assumptions that guarantee existence and uniqueness of both $(x^*,y^*)$ and $(x_p^*,y_p^*)$, and that make the approximation error quantifiable. 
\vspace{-0.05in}\noindent 

\subsection{ Technical Assumptions}

\vspace{-0.05in}We next present the main assumptions underlying our analysis. Since the projections $\Pi_{\mathcal X},\Pi_{\mathcal Y}$ are linear, the constrained problem in \eqref{eq:alg} inherits a linear structure analogous to that of Eq. \eqref{eq:root}. Accordingly, our assumptions parallel the standard conditions in the TTSA literature, extended to account for the subspace restriction. Assumptions~\ref{as:block_stab} and~\ref{as:proj_inv} guarantee existence and uniqueness of  $(x^*,y^*)$ and
$(x_p^*,y_p^*)$, respectively.
Assumption~\ref{as:resolvent_smallgain} imposes a variant of standard well-posedness conditions on the resolvent matrices associated with the linear subspaces, which will be used to characterize the approximation errors 
$\|x_p^*-x^*\|^2$ and $\|y_p^*-y^*\|^2$. Finally, Assumption~\ref{as:md} requires that the sampling noises  $\epsilon_{t},\psi_{t}$ are Martingale difference with bounded variances. Assumptions~\ref{as:block_stab}--\ref{as:md} are fairly standard in the SA literature \cite{Borkar97,kong2025ttsa,MouPanWai23}.  

\begin{assumption}\label{as:block_stab}
$A_{ff}$ and the Schur complement
$\Delta := A_{ss} - A_{sf}\,A_{ff}^{-1}\,A_{fs}$
are Hurwitz. 
\end{assumption}


\begin{assumption}\label{as:proj_inv}
Let
$c_1 := U_{\mathcal X}
  (U_{\mathcal X}^\top A_{ff}\,U_{\mathcal X})^{-1}
  U_{\mathcal X}^\top$.
The projected system matrices
$U_{\mathcal X}^\top A_{ff}\,U_{\mathcal X}$ and
$U_{\mathcal Y}^\top
  (A_{ss}+A_{sf}\,c_1\,A_{fs})\,U_{\mathcal Y}$
are Hurwitz.
\end{assumption}\vspace{0.1cm}

For convenience, we consider the following notation

\vspace{-0.05in}\noindent 
\begin{align}\label{eq:c3c4}
  \begin{aligned}
  \mu_x &:= \sigma_{\min}(
    U_{\mathcal X}^\top(I-A_{ff})\,
    U_{\mathcal X}),\\
  \mu_y &:= \sigma_{\min}(
       U_{\mathcal Y}^\top(I-A_{ss})\,
       U_{\mathcal Y}),\\
  c_3&:= U_{\mathcal X}\bigl(
       U_{\mathcal X}^\top(I\!-\!A_{ff})\,
       U_{\mathcal X}\bigr)^{\!-1}
       U_{\mathcal X}^\top,\\
  c_4
  &:= U_{\mathcal Y}\bigl(
       U_{\mathcal Y}^\top(I\!-\!A_{ss})\,
       U_{\mathcal Y}\bigr)^{\!-1}
       U_{\mathcal Y}^\top.
  \end{aligned}
\end{align}

\vspace{-0.05in}\noindent Note that the Hurwitz condition on
$U_{\mathcal X}^\top A_{ff}\,U_{\mathcal X}$ implies that $\mu_{x} > 0$. Additionaly, we assume the following conditions. 

\begin{assumption}\label{as:resolvent_smallgain}
The restricted resolvent constant $\mu_{y} > 0$. Moreover, the restricted resolvent matrices $(I - c_3 A_{fs} c_4 A_{sf})$ and
$(I - c_4 A_{sf} c_3 A_{fs})$ are invertible.
\end{assumption}


\begin{assumption}
  \label{as:md}
$\mathbb E[\varepsilon_t\mid\mathcal F_t]=0$,
$\mathbb E[\psi_t\mid\mathcal F_t]=0$ a.s., and
$\mathbb E[\|\varepsilon_t\|^2\mid\mathcal F_t]
  \le C_\varepsilon$,
$\mathbb E[\|\psi_t\|^2\mid\mathcal F_t]
  \le C_\psi$,
where $\mathcal F_t$ is the $\sigma$-algebra generated
by $\{(x_k,y_k)\}_{k\le t}$.
\end{assumption}

\noindent Assumption~\ref{as:md} is standard in the
SA literature
\cite{srikant2019finite,MouLiWaiBarJor20,Kaledin20a}.
\vspace{-0.1in}
\section{Main Result}\label{sec:main}
\vspace{-0.05in}Recall that $(x^*,y^*)$ are the solution of the unconstrained system in ~\eqref{eq:root} and  $(x_p^*,y_p^*)\in\mathcal X\times\mathcal Y$ are the fixed point solution of the constrained updates in \eqref{eq:alg}. Existence and uniqueness of these points follow from Assumptions \ref{as:block_stab} and \ref{as:proj_inv}. As illustrated in Fig. \ref{fig:decomposition}, the iterates generated by~\eqref{eq:alg} can converge at best to $(x_p^*,y_p^*)$. Our main theorem below formalizes this behavior by establishing finite-time convergence rate bounds for the PR averages $\bar{x}_{T}$ and $\bar{y}_{T}$. Specifically, these bounds consist of two components: a \textbf{statistical error}, characterizing the rate at which the PR averages converge to $(x_p^*,y_p^*)$, and an \textbf{approximation error}, representing the distance between the unconstrained and constrained solutions

\vspace{-0.05in}\noindent 
\begin{align*}
  \epsilon_x^2
  := \inf_{x\in\mathcal X}\|x^*-x\|^2\quad \text{ and }\quad  \epsilon_y^2
  := \inf_{y\in\mathcal Y}\|y^*-y\|^2.
\end{align*}

\vspace{-0.05in}\noindent For convenience, we consider the following notation
\vspace{-0.05in}
\begin{align*}
    &m_x := \sigma_{\min}(
    U_{\mathcal X}^\top A_{ff}\,
    U_{\mathcal X}),\\
    &m_y := \sigma_{\min}(
    U_{\mathcal Y}^\top
    (A_{ss}+A_{sf}\,c_1\,A_{fs})\,
    U_{\mathcal Y}),\\
    &\kappa_{xy}
  := \|(I - c_3 A_{fs} c_4 A_{sf})^{-1}\|_2,\\
  &\kappa_{yx}
  := \|(I - c_4 A_{sf} c_3 A_{fs})^{-1}\|_2.
\end{align*}

\vspace{-0.05in}
\begin{thm}\label{thm:main}
Suppose that Assumptions~\ref{as:block_stab}--\ref{as:md}
hold.
Let $\alpha,\beta>0$ be constant step sizes with
$\beta/\alpha\ll 1$,
$\alpha < 1/\|A_{ff}\|_2$, and
$\beta < 1/\|A_{ss}\|_2$.
Then we have\vspace{-0.05in}
\begin{align}  
\begin{aligned}
  \mathbb{E}\!\bigl[\|\bar x_T\!-\!x^*\|^2\bigr]
  &\le 2B_{xx}\,\epsilon_x^2
       + 2B_{xy}\,\epsilon_y^2
       + \frac{2L_x}{T},
  \\
  \mathbb{E}\!\bigl[\|\bar y_T\!-\!y^*\|^2\bigr]
  &\le 2B_{yx}\,\epsilon_x^2
       + 2B_{yy}\,\epsilon_y^2
       + \frac{2L_y}{T},
       \end{aligned}
\end{align}

\vspace{-0.05in}\noindent where $L_x,L_y$ are the statistical constants
\vspace{-0.05in}
\begin{align*}
  L_y
  &= \frac{2\,C_\psi}{m_y^2}
     + \frac{2\|A_{sf}\|_2^2}{m_y^2\,m_x^2}\,
       C_\varepsilon,\\
  L_x
  &= \frac{2\|A_{fs}\|_2^2}{m_x^2\,m_y^2}\,C_\psi
     + \!\left(\frac{4\|A_{fs}\|_2^2\|A_{sf}\|_2^2}
            {m_y^2\,m_x^4}
       + \frac{1}{m_x^2}
     \right)\!C_\varepsilon,
\end{align*}
\vspace{-0.05in}\noindent and the approximation constants defined as
\vspace{-0.05in}
\begin{align*}
  B_{xx}
  &= 2\kappa_{xy}^2
     \!\left(1+\frac{\|A_{ff}\|_2}{\mu_x}\right)^{\!2},
  \\
  B_{xy}
  &= 2\kappa_{xy}^2
     \!\left(\frac{\|A_{fs}\|_2}{\mu_x}\right)^{\!2}
     \!\left(1+\frac{\|A_{ss}\|_2}{\mu_y}\right)^{\!2},
  \\
  B_{yy}
  &= 2\kappa_{yx}^2
     \!\left(1+\frac{\|A_{ss}\|_2}{\mu_y}\right)^{\!2},
  \\
  B_{yx}
  &= 2\kappa_{yx}^2
     \!\left(\frac{\|A_{sf}\|_2}{\mu_y}\right)^{\!2}
     \!\left(1+\frac{\|A_{ff}\|_2}{\mu_x}\right)^{\!2}.
\end{align*}\vspace{0.05cm}
\end{thm}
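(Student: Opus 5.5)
The proof splits the error with the triangle inequality, $\|\bar x_T-x^*\|^2\le 2\|x_p^*-x^*\|^2+2\|\bar x_T-x_p^*\|^2$ (and likewise for $y$). We then show separately that $\|x_p^*-x^*\|^2\le B_{xx}\epsilon_x^2+B_{xy}\epsilon_y^2$ and that $\mathbb E\|\bar x_T-x_p^*\|^2\le L_x/T$. Throughout we read the update \eqref{eq:alg} in its intended form, $x_{t+1}=\Pi_{\mathcal X}[x_t+\alpha(g(x_t,y_t)+\varepsilon_t)]$ and $y_{t+1}=\Pi_{\mathcal Y}[y_t+\beta(h(x_t,y_t)+\psi_t)]$.

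\textbf{Statistical part.} Write $x_t=U_{\mathcal X}u_t$ and $y_t=U_{\mathcal Y}v_t$, and set $\tilde x_t=x_t-x_p^*$, $\tilde y_t=y_t-y_p^*$. The projected fixed-point equations give $\Pi_{\mathcal X}g(x_p^*,y_p^*)=0$ and $\Pi_{\mathcal Y}h(x_p^*,y_p^*)=0$. Subtracting these from the recursion, summing over $t=0,\dots,T-1$ and dividing by $T$ (the constant-step telescoping), we obtain
\begin{align*}
\tfrac{1}{\alpha T}(x_T-x_0)&=\Pi_{\mathcal X}(A_{ff}\bar{\tilde x}_T+A_{fs}\bar{\tilde y}_T)+\Pi_{\mathcal X}\bar\varepsilon_T,\\
\tfrac{1}{\beta T}(y_T-y_0)&=\Pi_{\mathcal Y}(A_{sf}\bar{\tilde x}_T+A_{ss}\bar{\tilde y}_T)+\Pi_{\mathcal Y}\bar\psi_T,
\end{align*}
where bars denote time averages. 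We solve the first equation for $\bar{\tilde x}_T$ inside $\mathcal X$ using $c_1$; this contributes the factor $1/m_x$. Substituting into the second equation gives the reduced operator $U_{\mathcal Y}^\top(A_{ss}+A_{sf}c_1A_{fs})U_{\mathcal Y}$ (the sign convention here should be checked against Assumption~\ref{as:proj_inv}), which we invert at cost $1/m_y$. This yields
\begin{equation*}
\bar{\tilde y}_T\approx m_y^{-1}\bigl(\bar\psi_T+\|A_{sf}\|\,m_x^{-1}\bar\varepsilon_T\bigr),
\end{equation*}
and then $\bar{\tilde x}_T$ is $c_1$ applied to $(\bar\varepsilon_T+A_{fs}\bar{\tilde y}_T)$. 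The martingale-difference property in Assumption~\ref{as:md} gives $\mathbb E\|\bar\varepsilon_T\|^2\le C_\varepsilon/T$ and $\mathbb E\|\bar\psi_T\|^2\le C_\psi/T$. Applying $(a+b)^2\le 2a^2+2b^2$ then reproduces exactly the stated $L_y$, and $L_x$ with its $4\|A_{fs}\|^2\|A_{sf}\|^2/(m_y^2m_x^4)$ term.

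The boundary terms $(x_T-x_0)/(\alpha T)$ and $(y_T-y_0)/(\beta T)$ are the main obstacle. They require a uniform second-moment bound $\sup_t\mathbb E\|\tilde x_t\|^2+\mathbb E\|\tilde y_t\|^2<\infty$. We would obtain it by the standard Lyapunov argument for TTSA: use the fast-variable shift $\hat x_t=\tilde x_t+c_1A_{fs}\tilde y_t$ and the Hurwitz conditions of Assumption~\ref{as:proj_inv}, with $\beta/\alpha$ small and $\alpha<1/\|A_{ff}\|$, $\beta<1/\|A_{ss}\|$. The resulting contributions are $O(1/(\alpha^2T^2))$ and $O(1/(\beta^2T^2))$. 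These terms are not part of the stated constants, so the theorem's $L/T$ should be read as leading order, or we assume an initialization and horizon for which they are absorbed. We flag this explicitly as the delicate point.

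\textbf{Approximation part.} Let $\hat x=\Pi_{\mathcal X}x^*$ and $\hat y=\Pi_{\mathcal Y}y^*$, so that $\|x^*-\hat x\|=\epsilon_x$ and $\|y^*-\hat y\|=\epsilon_y$. Rewrite the projected equations as resolvent fixed points:
\begin{equation*}
x_p^*=c_3\bigl((I-A_{ff})x_p^*\text{-free form}\bigr),
\end{equation*}
that is, $U_{\mathcal X}^\top[(I-A_{ff})x_p^*]=U_{\mathcal X}^\top[x_p^*-A_{fs}y_p^*+b_1]$. Doing the same for $x^*$ and subtracting, using $b_1=A_{ff}x^*+A_{fs}y^*$, gives
\begin{equation*}
x_p^*-\hat x=c_3\bigl[(I-A_{ff})(x^*-\hat x)\text{-type term}\bigr]+c_3A_{fs}(y_p^*-y^*).
\end{equation*}
Hence
\begin{equation*}
\|x_p^*-x^*\|\le(1+\|A_{ff}\|/\mu_x)\epsilon_x+\|c_3A_{fs}(y_p^*-y^*)\|,
\end{equation*}
and symmetrically for $y$ with $c_4$ and $\mu_y$. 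Substituting the $y$-relation into the $x$-relation produces
\begin{equation*}
(I-c_3A_{fs}c_4A_{sf})(x_p^*-x^*)=\text{(terms in }\epsilon_x,\epsilon_y).
\end{equation*}
Inverting with $\kappa_{xy}$ from Assumption~\ref{as:resolvent_smallgain} and squaring via $(a+b)^2\le2a^2+2b^2$ gives $B_{xx}$ and $B_{xy}$; the same steps with $\kappa_{yx}$ give $B_{yx}$ and $B_{yy}$. Combining this with the statistical part completes the bound.
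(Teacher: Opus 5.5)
Your outline follows the paper's proof step for step: the triangle split, constant-step telescoping, inversion with $c_1$ and then the reduced $y$-operator, the martingale variance bound, and then the $c_3$/$c_4$ elimination with $\kappa_{xy},\kappa_{yx}$. Both issues you flag in the statistical part are genuine failures, not loose ends. First, the paper's own argument only gives $L_x/T+D_x/T^2$ and then drops $D_x$. Without that term the displayed inequality is false: take the noiseless case with $x^*\in\mathcal X$, $y^*\in\mathcal Y$, and $T=1$ would force $x_0=x^*$. Second, the sign does not check out. From $x_p^*(y)=c_1(b_1-A_{fs}y)$ you get $x_p^*(\bar y_T)-x_p^*=-c_1A_{fs}(\bar y_T-y_p^*)$. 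So the operator to invert is the projected Schur complement $U_{\mathcal Y}^\top(A_{ss}-A_{sf}c_1A_{fs})U_{\mathcal Y}$, consistent with $\Delta$ in Assumption~\ref{as:block_stab}. It is not the ``$+$'' matrix in Assumption~\ref{as:proj_inv} and in $m_y$, and the paper's proof makes the same slip. This changes the result, not just notation. Take scalars with $\mathcal X=\mathcal Y=\mathbb R$, $A_{ff}=A_{ss}=-1$, $A_{fs}=1$, $A_{sf}=0.99$, $\varepsilon_t\equiv 0$. All assumptions hold and $m_y=1.99$, yet $\Delta=-0.01$ and $T\,\mathbb E|\bar y_T-y^*|^2\to 10^4C_\psi$, far above $2L_y\approx C_\psi$.

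The gap you do not flag is in the approximation part. Under \eqref{eq:root}, your resolvent identity has $x_p^*$ on both sides. When you subtract the corresponding identity for $x^*$, the $U_{\mathcal X}^\top(x_p^*-x^*)$ terms cancel and you are back at $U_{\mathcal X}^\top[A_{ff}(x_p^*-x^*)+A_{fs}(y_p^*-y^*)]=0$. After splitting $x_p^*-x^*=(x_p^*-\hat x)+(\hat x-x^*)$, the only matrix you can invert is $U_{\mathcal X}^\top A_{ff}U_{\mathcal X}$. This gives $x_p^*-x^*=(I-c_1A_{ff})(\hat x-x^*)-c_1A_{fs}(y_p^*-y^*)$. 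The $c_3$-relation you are aiming at is stated in the paper's Part~2 as $x_p^*-x^*=(I+c_3A_{ff})(\hat x-x^*)+c_3A_{fs}(y_p^*-y^*)$. That relation is correct for the fixed-point model $x=A_{ff}x+A_{fs}y+b_1$, not for \eqref{eq:root}. The discrepancy is not a constant factor. Take $A_{ff}=\begin{pmatrix}-a&1\\0&-1\end{pmatrix}$, $\mathcal X=\mathrm{span}(e_1)$, no coupling, $y^*\in\mathcal Y$, no noise, and $x^*=e_2$. Then $x_p^*=-e_1/a$ and $\|\bar x_T-x^*\|^2\to 1+a^{-2}$ with $\epsilon_x=1$. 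Meanwhile $B_{xx}\le 2(1+\|A_{ff}\|_2)^2$ stays bounded as $a\to 0$, because $\mu_x=1+a$. The $y$-analogue would also need $U_{\mathcal Y}^\top A_{ss}U_{\mathcal Y}$ to be invertible, which nothing assumes. A correct Part~2 eliminates $x_p^*-\hat x$ with $c_1$ and then inverts the same projected Schur complement as in Part~1. The approximation constants then come out in terms of $m_x$ and the corrected $m_y$, not $\mu_x,\mu_y,\kappa_{xy},\kappa_{yx}$.
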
 
\begin{remark}
The bound in Theorem~\ref{thm:main} decomposes the optimal errors into two distinct components. The \textbf{statistical constants} $L_x,L_y$ govern the
convergence of the PR averages toward the
constrained solution $(x_p^*,y_p^*)$ at a rate $O(1/T)$.
Their structure reveals how noise propagates through
the two-time-scale coupling:
$L_y$ depends on the slow noise variance $C_\psi$
directly and on the fast noise $C_\varepsilon$ through
the cross-coupling norm $\|A_{sf}\|_2$, both scaled
by the projected stability margins $m_x,m_y$.
Similarly, $L_x$ inherits contributions from $C_\psi$
through $\|A_{fs}\|_2$.
Smaller margins $m_x,m_y$ (i.e., weaker projected
stability) amplify the statistical error.

The \textbf{approximation constants} $B_{xx},B_{xy},
B_{yy},B_{yx}$ are
controlled by the resolvent margins $\mu_x,\mu_y$
and the coupling constants $\kappa_{xy},\kappa_{yx}$.
The off-diagonal terms $B_{xy},B_{yx}$ capture an effect unique to the TTSA setting:
even if $\epsilon_x^2$ is small, the $y$-error
inherits a contribution through $B_{yx}$, and
vice versa.

Our theoretical bound provides a clean representation on these two expected errors of the projected TTSA in \eqref{eq:alg}:
the subspace choice determines the approximation errors $\epsilon_x^2,\epsilon_y^2$, while the
statistical errors decay with the number of
iterations $T$ regardless of the subspace.
In Section~\ref{sec:simulation}, we verify this decomposition empirically: the PR-averaged errors
decay and then plateau at the approximation error, with the $O(1/T)$ rate consistent across different subspace choices. We also discuss how to choose the subspaces to control the values of $\epsilon_x^2,\epsilon_y^2$.
\end{remark}
\section{Numerical Experiments}\label{sec:simulation}\vspace{-0.05in}
In this section, we provide numerical simulations to validate the theoretical bounds in Theorem~\ref{thm:main}. We apply the projected TTSA in \eqref{eq:alg} to two settings: a synthetic coupled system and GTD for policy evaluation in reinforcement learning. 
In the tabular setting, GTD 
is an unprojected TTSA \cite{SuttonEtAl09GTD}: both the value
function $V^\pi\in\mathbb R^{|\mathcal S|}$ and the auxiliary
variable $w\in\mathbb R^{|\mathcal S|}$ are updated in
the full ambient space.
When the state space is large, one can use linear function approximation $V^\pi\approx\Phi\theta$ with the feature matrix
$\Phi\in\mathbb R^{|\mathcal S|\times d}$, $d\ll|\mathcal S|$,
which restricts both variables to the subspace
$\mathrm{span}(\Phi)$.
This is precisely the projected TTSA
in~\eqref{eq:alg}, with
$\mathcal X=\mathcal Y=\mathrm{span}(\Phi)$ and
$\Pi_{\mathcal X}=\Pi_{\mathcal Y}=\Phi(\Phi^\top\Phi)^{-1}\Phi^\top$.
\vspace{-0.1in}
\subsection{Synthetic Coupled Linear Systems}
\label{subsec:synth_exp}

\vspace{-0.05in}We construct a stable coupled system of the
form~\eqref{eq:root} with ambient dimensions
$n=20$ and $m=16$.
The self-dynamics matrices $A_{ff}$ and $A_{ss}$ are
negative definite (diagonal in random orthonormal bases
$\widetilde U_x\!\in\!\mathbb{R}^{n\times n}$,
$\widetilde U_y\!\in\!\mathbb{R}^{m\times m}$),
while the cross-couplings $A_{fs},A_{sf}$ are random
matrices scaled by $0.08$ to preserve overall stability.
A ground-truth solution $(x^*,y^*)$ is defined with
exponentially decaying coordinates in
$\widetilde U_x,\widetilde U_y$, and $(b_1,b_2)$ are
set so that \eqref{eq:root} holds.
We constrain the iterates to rank-$r$ subspaces
$\mathcal X = \mathrm{span}(\widetilde U_x(:,1\!:\!r))$
and
$\mathcal Y = \mathrm{span}(\widetilde U_y(:,1\!:\!r))$
with $r=6$, and compute the constrained solution
$(x_p^*,y_p^*)$ by solving the reduced system.
The algorithm~\eqref{eq:alg} is implemented for $T=5{,}000$
iterations, averaged over $25$ independent trials,
with i.i.d.\ Gaussian noise
$\varepsilon_t,\psi_t\sim\mathcal N(0,\sigma^2 I)$,
$\sigma=0.08$, and constant step sizes
$\alpha=0.2$ (fast) and $\beta=0.01$ (slow),
giving $\beta/\alpha=0.05$.

Figure~\ref{fig:synthetic} plots the PR-averaged errors $\|\bar x_t-x^*\|^2$ and
$\|\bar y_t-y^*\|^2$ on a log scale, with horizontal
dashed lines at the approximation errors
$\|x_p^*-x^*\|^2$ and $\|y_p^*-y^*\|^2$.
Both PR errors decrease and then plateau near the
corresponding dashed lines, confirming the decomposition in Theorem~\ref{thm:main}: the
statistical term decays as $O(1/T)$, while the limiting accuracy is set by the approximation error.
The fast iterate~$x$ ($\alpha=0.2$)
exhibits a more rapid initial decay, whereas the slow iterate~$y$ ($\beta=0.01$) shows a longer
transient, consistent with the two-time-scale structure.
To further validate the $O(1/T)$ rate, we overlay the
curve $L_y/T + \|y_p^*-y^*\|^2$ (dotted), where $L_y$
is a fitted constant.
This curve lies above the slow PR error
$\|\bar y_t - y^*\|^2$ for all~$T$, confirming that
the bound in Theorem~\ref{thm:main} correctly captures
both the $O(1/T)$ decay rate and the approximation
error floor.
\vspace{-0.1in}
\begin{figure}[htb]
  \centering
  \includegraphics[width=\linewidth]%
    {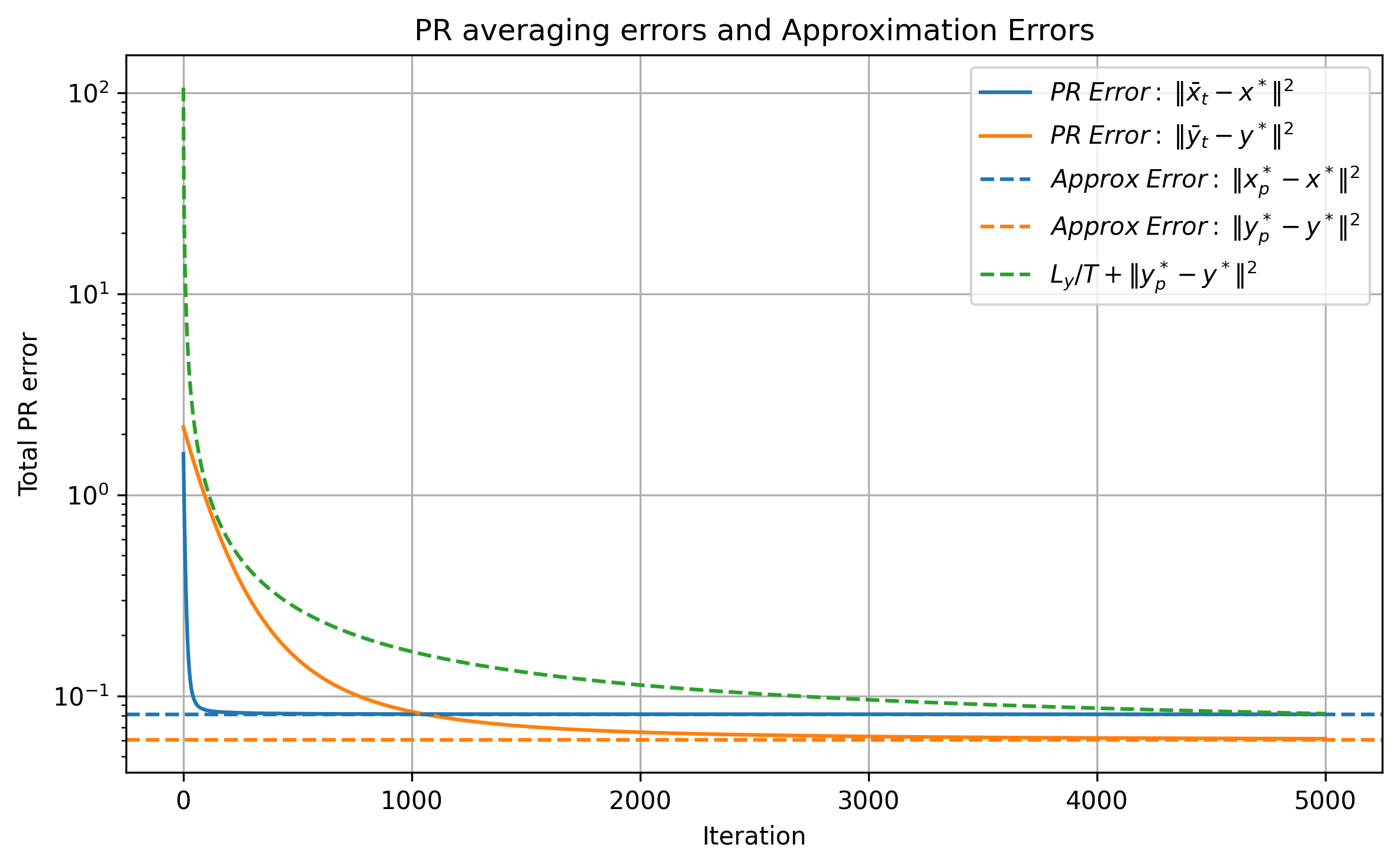}
  \vspace{-0.2in}
  \caption{%
    Synthetic coupled system.}\vspace{-0.2in}
  \label{fig:synthetic}
\end{figure}

\subsection{GTD with Feature Mismatch}\vspace{-0.05in}
\label{subsec:gtd_exp}
In this section, we apply the projected TTSA to study the GTD algorithm. Using the simulation of GTD, our first goal is to verify that the bias--variance decomposition in Theorem~\ref{thm:main} also holds in a reinforcement learning setting.  
Second, we investigate how the choice of feature subspace $\mathrm{span}(\Phi)$ affects each component
of the theoretical bound---in particular, whether the $O(1/T)$ convergence rate of the statistical error is robust across subspaces with very different alignment to the value function.

We consider policy evaluation on a $9$-state Markov decision process (MDP) with a discount factor $\gamma=0.9$.
The discounted value function $V^\pi\in\mathbb{R}^9$ is the
unique solution of the Bellman equation of the underlying MDP. 
We will consider the linear function approximation setting where $V^\pi\approx\Phi\theta$ within a low-dimensional feature subspace $\mathrm{span}(\Phi)$. The goal of GTD is to find $\theta^*$ such that $\Phi\theta^*$ is the best approximation of $V^{\pi}$.

To evaluate the effects of different choices of the subspace spanned by features $\Phi$ on the convergence of GTD, we consider a random orthonormal basis
$W=[w_1,\dots,w_9]\in\mathbb{R}^{9\times 9}$
and set $V^{\pi} = Wc$ with coefficients
\vspace{-0.1in}
\[
  c = [3.0,\,2.4,\,1.8,\,1.4,\,1.1,\,0.5,\,10^{-3},\,5\!\cdot\!10^{-4},\,10^{-4}]^\top\!.\vspace{-0.1in}
\]
By design, $V^\pi$ concentrates nearly all of its
energy in the first six directions
$w_1,\dots,w_6$, while the last three directions
$w_7,w_8,w_9$ carry negligible energy.
This construction will enable us to study the alignment between the feature subspace and the value function, and thereby illustrate different aspects of the approximation error $\epsilon^2$ in the bound in Theorem \ref{thm:main}.

We will approximate
$V^\pi$ by $\Phi\theta$ with $\Phi\in\mathbb{R}^{9\times 3}$. Given a choice of $\Phi$, we will implement GTD to find $\theta^*$ that gives the best approximation of $V^{\pi}$. As mentioned, GTD can be formulated as a variant of the projected TTSA in \eqref{eq:alg}, where $\mathcal X=\mathcal Y=\mathrm{span}(\Phi)$. For our simulations, we set step sizes $\alpha=0.02$ and $\beta=0.001$
($\beta/\alpha=0.05$) and conduct experiments in three different choices of feature matrices $\Phi$:
(i)~\textbf{well aligned}:
    $\Phi^1=[w_1,w_2,w_3]$, spanning the
    highest-energy directions of $V^\pi$;
(ii)~\textbf{medium aligned}:
    $\Phi^2$ is a random orthonormal basis in
    $\mathbb{R}^{9\times 3}$ (via QR factorization);
(iii)~\textbf{poorly aligned}:
    $\Phi^3=[w_7,w_8,w_9]$, spanning the
    lowest-energy directions, nearly orthogonal to
    $V^\pi$. For each $\Phi^i$, we report the the approximation error
$\|V^\pi-\Phi^i\theta^{i,*}\|^2$, 
the statistical error
$\|\Phi^i\theta^{i,*}
  -\Phi^i\bar\theta^i_t\|^2$, and total PR error
$\|V^\pi-\Phi^i\bar\theta^i_t\|^2$. Our simulation results are shown in Figure \ref{fig:gtd}.

The top plot in Figure~\ref{fig:gtd} displays the approximation
error $\|V^\pi-\Phi^i\theta^{i,*}\|^2$ across the three feature families.
As expected, well-aligned features yield the smallest
approximation error. On the other hand, the middle plot in Figure~\ref{fig:gtd} displays the statistical
error $\|\Phi^i\theta^{i,*}-\Phi^i\bar\theta^i_t\|^2$. We order the curves from well-aligned (highest) to poorly-aligned (lowest). This ordering reflects the initial error magnitude, not the convergence rate: since $V^\pi$ has positive
entries and well-aligned features capture more of its
energy, the projected value
$\Phi^1\theta^{1,*}$ is largest in norm, producing a larger initial gap
$\|\Phi^i\theta^{i,*}-\Phi^i\bar\theta^i_0\|^2$.
Crucially, the slopes of all three curves are
similar, indicating a consistent decaying rate 
across feature choices.

Finally, the bottom plot in Figure~\ref{fig:gtd} shows the total PR error
$\|V^\pi-\Phi^i\bar\theta^i_t\|^2$ for the
well-aligned case $\Phi^1$, together with its
statistical and approximation components on the same
log-scale axis. The statistical error decays to zero while the total
PR error plateaus at the approximation error as shown by our theoretical bound in 
Theorem~\ref{thm:main}.
We also overlay the curve $L_y/T$ (dotted), where
$L_y$ is a fitted constant, which lies above the
statistical error for all~$T$.
This confirms that the statistical error decays at
the $O(1/T)$ rate predicted by Theorem~\ref{thm:main}.
\vspace{-0.1in}\noindent 

\begin{figure}[htb]
  \centering
  \includegraphics[width=\linewidth]%
    {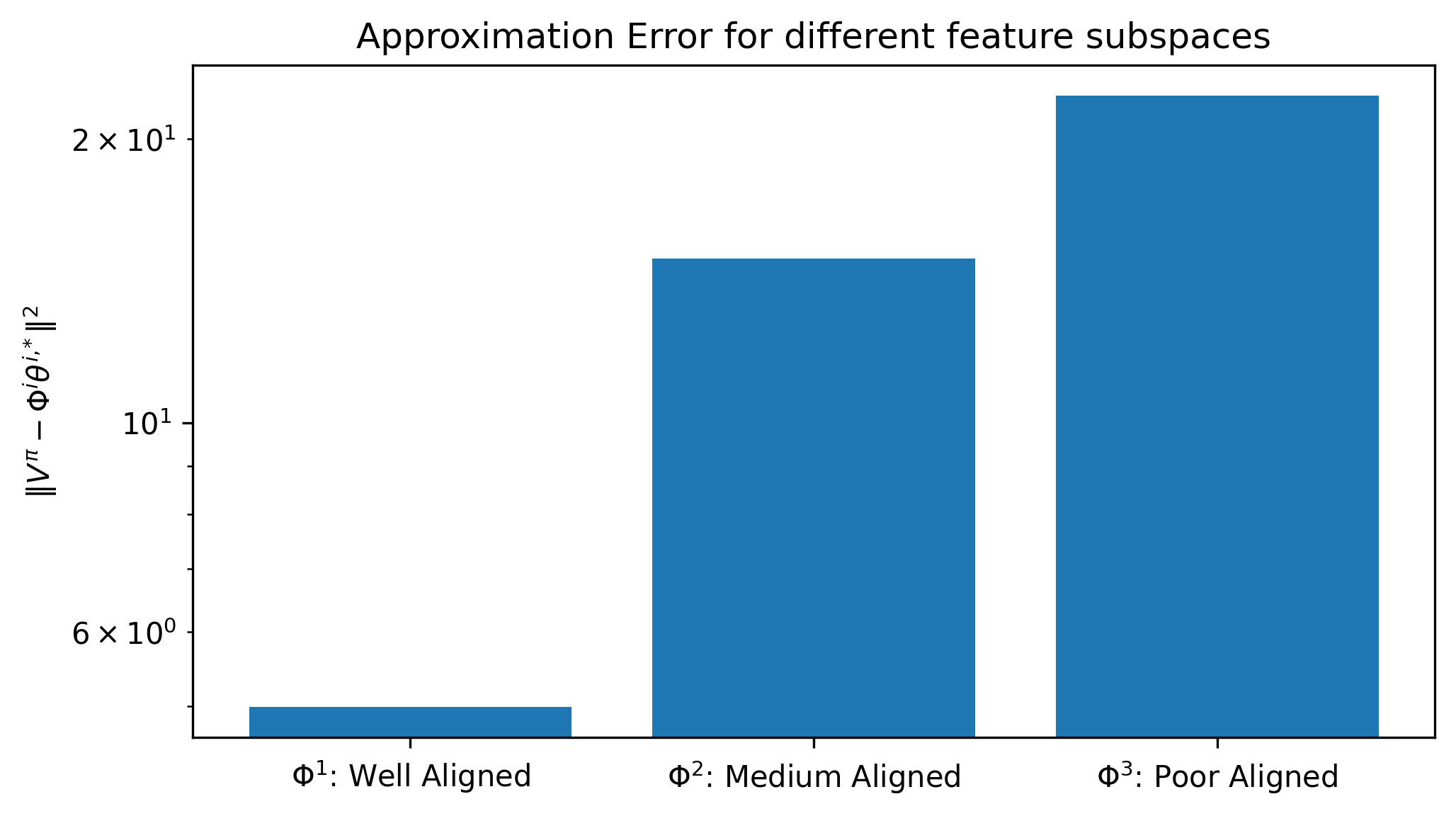}\\[2pt]
  \includegraphics[width=\linewidth]%
    {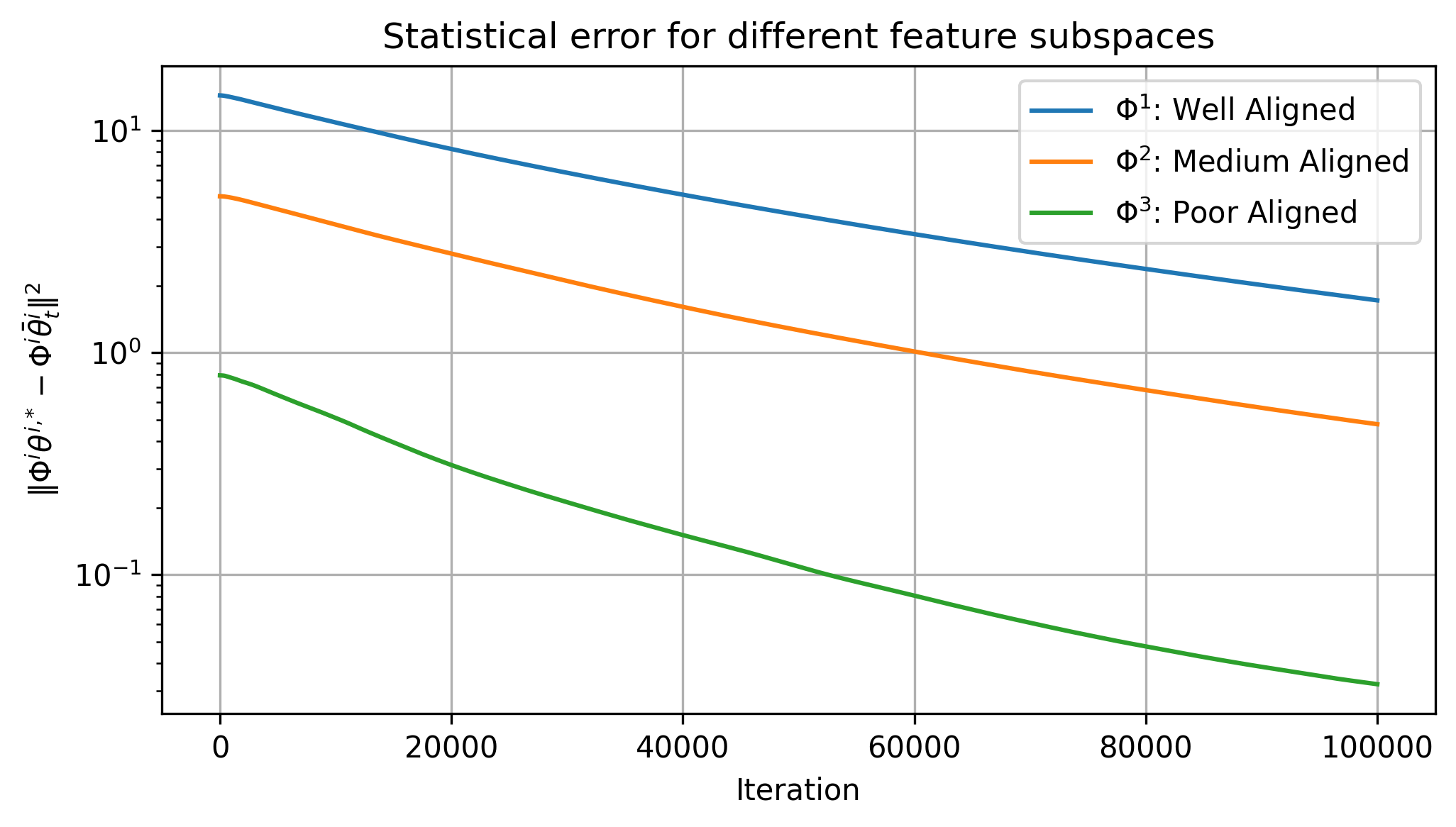}\\[2pt]
  \includegraphics[width=\linewidth]%
    {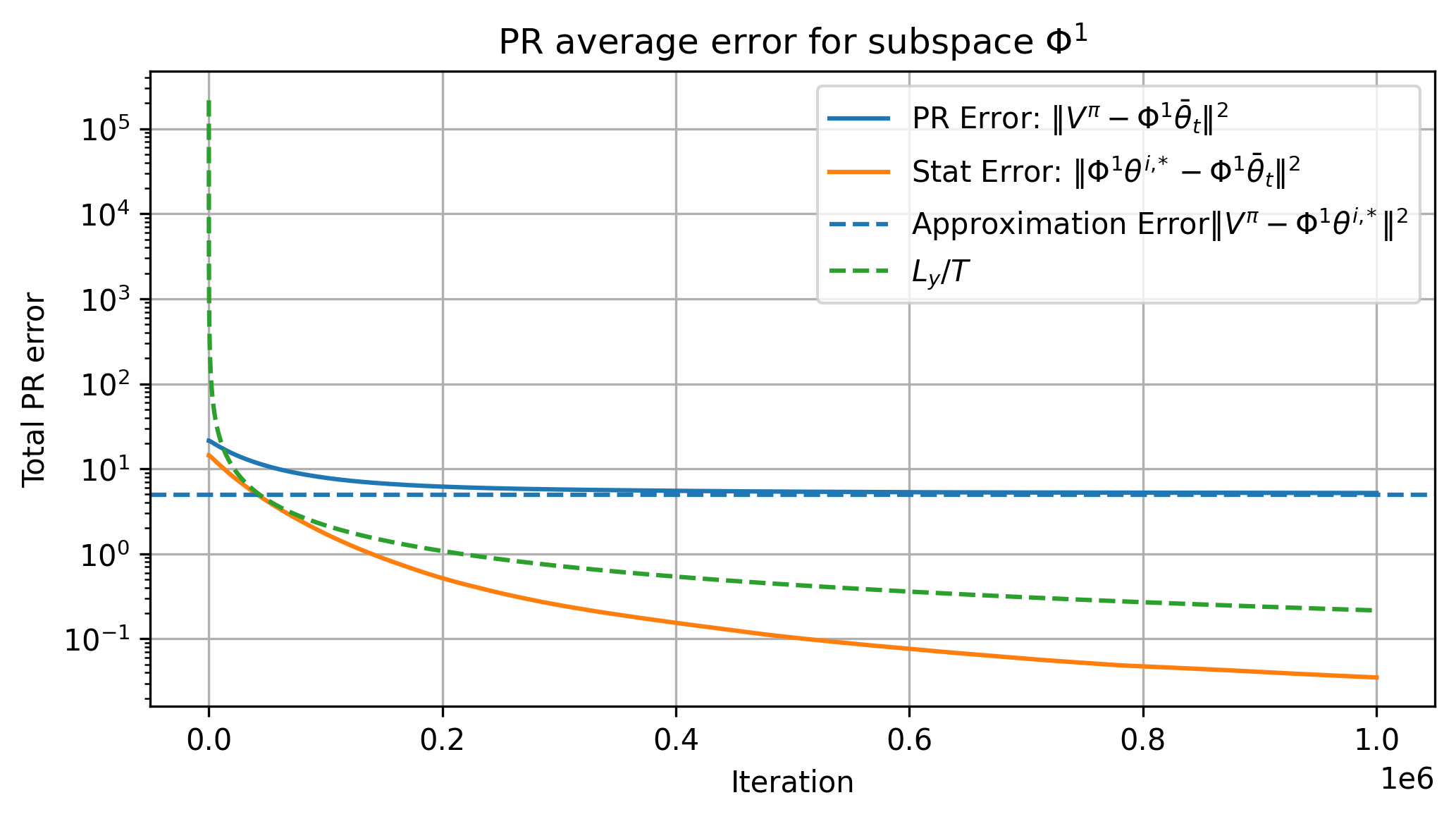}
  \vspace{-0.25in}
  \caption{Performance of GTD under different $\Phi$.}
  \label{fig:gtd}
\end{figure}







\vspace{-0.05in}
\section*{APPENDIX}\vspace{-0.05in}
\section{Proof of Theorem~\ref{thm:main}}\vspace{-0.05in}
\label{sec:appendix}

We decompose the PR-averaged error into
(i)~a \emph{statistical term} measuring how well
$\bar x_T,\bar y_T$ track the constrained solution
$(x_p^*,y_p^*)$, and
(ii)~an \emph{approximation term} measuring the
distance between $(x_p^*,y_p^*)$ and the unconstrained
solution $(x^*,y^*)$.
Throughout, we use the fact that the projections are
linear and that the iterates remain feasible:
$x_t\in\mathcal X$ and $y_t\in\mathcal Y$ for
all~$t$. Our analysis is composed of $5$ steps.

\noindent\textbf{Step~0: Projection-solution identity.}
Solving a projected linear equation over a subspace
reduces to an $r$-dimensional system in reduced
coordinates.

\begin{lemma}[Projected linear solve]
  \label{lem:projection_solve}
Let $\mathcal Z$ be a linear subspace with orthonormal
basis $U_{\mathcal Z}$, and let $A$ be a matrix such
that $U_{\mathcal Z}^\top A\,U_{\mathcal Z}$ is
invertible.
If $z_p\in\mathcal Z$ satisfies
$\Pi_{\mathcal Z}(Az_p)=\Pi_{\mathcal Z}(b)$, then
\vspace{-0.1in}
\[
  z_p
  = U_{\mathcal Z}
    (U_{\mathcal Z}^\top A\,U_{\mathcal Z})^{-1}
    U_{\mathcal Z}^\top b.
\]
\end{lemma}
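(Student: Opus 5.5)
Since $z_p\in\mathcal Z$ and $U_{\mathcal Z}$ is an orthonormal basis of $\mathcal Z$, I will write $z_p=U_{\mathcal Z}w$ for the unique reduced coordinate vector $w:=U_{\mathcal Z}^\top z_p$. This uses $U_{\mathcal Z}^\top U_{\mathcal Z}=I$ together with $\Pi_{\mathcal Z}z_p=U_{\mathcal Z}U_{\mathcal Z}^\top z_p=z_p$. The whole argument then consists of turning the projected equation into a square linear system for $w$ and inverting it.

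The projected equation reads $U_{\mathcal Z}U_{\mathcal Z}^\top A\,U_{\mathcal Z}w=U_{\mathcal Z}U_{\mathcal Z}^\top b$. I will left-multiply both sides by $U_{\mathcal Z}^\top$ and use $U_{\mathcal Z}^\top U_{\mathcal Z}=I$ to obtain
\[
(U_{\mathcal Z}^\top A\,U_{\mathcal Z})\,w=U_{\mathcal Z}^\top b .
\]
This step loses no information, because $U_{\mathcal Z}$ has full column rank, so $U_{\mathcal Z}v=U_{\mathcal Z}v'$ forces $v=v'$. By hypothesis $U_{\mathcal Z}^\top A\,U_{\mathcal Z}$ is invertible, which gives $w=(U_{\mathcal Z}^\top A\,U_{\mathcal Z})^{-1}U_{\mathcal Z}^\top b$. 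Multiplying back by $U_{\mathcal Z}$ yields the claimed formula for $z_p$.

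As a by-product, the same computation run in reverse shows that the vector given by the formula lies in $\mathcal Z$ and satisfies the projected equation. Hence the solution exists and is unique, which is what is needed later to identify $x_p^*$ and $y_p^*$ via $c_1$, $c_3$ and $c_4$.

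I do not expect a real obstacle here. The only point requiring care is justifying that left-multiplication by $U_{\mathcal Z}^\top$ is reversible, that is, that the reduced system is equivalent to the projected one. This follows from the orthonormality of the columns of $U_{\mathcal Z}$.
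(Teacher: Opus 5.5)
Your proof is correct and follows the paper's own argument: write $z_p=U_{\mathcal Z}w$, left-multiply the projected equation by $U_{\mathcal Z}^\top$ to get the reduced square system $(U_{\mathcal Z}^\top A\,U_{\mathcal Z})w=U_{\mathcal Z}^\top b$, and invert it. Your extra remarks on the reversibility of that step and on existence and uniqueness are valid, but the stated one-directional implication does not need them.
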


\begin{proof}
Write $z_p=U_{\mathcal Z}\tilde z$ and left-multiply $\Pi_{\mathcal Z}(Az_p)=\Pi_{\mathcal Z}(b)$ by $U_{\mathcal Z}^\top$ to obtain $U_{\mathcal Z}^\top A\,U_{\mathcal Z}\tilde z = U_{\mathcal Z}^\top b$; inverting yields the claim.
\end{proof}

\noindent\textbf{Step~1: Bias--variance decomposition.}
Using $\|a+b\|^2\le 2\|a\|^2+2\|b\|^2$ with
$a=\bar x_T-x_p^*$ and $b=x_p^*-x^*$ gives\vspace{-0.05in}
\begin{align}\label{eq:decomp}
  \begin{aligned}
  \mathbb E\|\bar x_T\!-\!x^*\|^2
  &\le 2\mathbb E\|\bar x_T\!-\!x_p^*\|^2
       + 2\|x_p^*\!-\!x^*\|^2,\\
  \mathbb E\|\bar y_T\!-\!y^*\|^2
  &\le 2\mathbb E\|\bar y_T\!-\!y_p^*\|^2
       + 2\|y_p^*\!-\!y^*\|^2.
  \end{aligned}
\end{align}

\vspace{-0.05in}\noindent It remains to bound (i)~the statistical errors
$\mathbb E\|\bar x_T-x_p^*\|^2$ and
$\mathbb E\|\bar y_T-y_p^*\|^2$ (\textbf{Part~1}),
and (ii)~the approximation errors
$\|x_p^*-x^*\|^2$ and $\|y_p^*-y^*\|^2$
(\textbf{Part~2}).

\noindent\textbf{Part~1: Statistical error.}
We bound
$\mathbb E[\|\bar y_T-y_p^*\|^2]$ and
$\mathbb E[\|\bar x_T-x_p^*\|^2]$ by relating the
block averages to the constrained solution via telescoping
sums.
The key ideas are:
(i)~define the instantaneous constrained solution
$x_p^*(y)$ at any fixed~$y$;
(ii)~express $\bar x_T-x_p^*(\bar y_T)$ and
$\bar y_T-y_p^*$ via sums of the update increments;
(iii)~invert the projected operators using
Lemma~\ref{lem:projection_solve}.

\noindent\textbf{1.\;Define $x_p^*(y)$ and the
constrained solution.}
For each $y\in\mathbb R^{m}$, let
$x_p^*(y)\in\mathcal X$ be the unique solution of\vspace{-0.05in}\noindent 
\[
  \Pi_{\mathcal X}\!\bigl(
    A_{ff}x_p^*(y)+A_{fs}y-b_1
  \bigr)=0.
\]

\vspace{-0.05in}\noindent By Lemma~\ref{lem:projection_solve} with
$A=A_{ff}$ and $b=b_1-A_{fs}y$,
\begin{equation}\label{eq:xpstar_y}
  x_p^*(y)
  = c_1(b_1-A_{fs}y),
\end{equation}

\vspace{-0.05in}\noindent where
$c_1 := U_{\mathcal X}
  (U_{\mathcal X}^\top A_{ff}\,U_{\mathcal X})^{-1}
  U_{\mathcal X}^\top$.
Note that $x_p^*=x_p^*(y_p^*)$ by definition.

\noindent\textbf{2.\;Telescoping sums and block
averages.}
Fix $T\ge 1$ and define the block averages and noise
sums

\vspace{-0.05in}\noindent 
\begin{align*}
  \bar x_T
  &= \tfrac{1}{T}
     \textstyle\sum_{t=0}^{T-1} x_t,
  &
  \bar y_T
  &= \tfrac{1}{T}
     \textstyle\sum_{t=0}^{T-1} y_t,\\
  \bar\varepsilon_T
  &= \textstyle\sum_{t=0}^{T-1}\varepsilon_t,
  &
  \bar\psi_T
  &= \textstyle\sum_{t=0}^{T-1}\psi_t.
\end{align*}

\vspace{-0.05in}\noindent \emph{$y$-update.}
Summing the $y$-recursion~\eqref{eq:alg} from $t=0$
to $T-1$ with constant step size $\beta$, expanding
in block averages, and centering around the
constrained-solution condition
$\Pi_{\mathcal Y}b_2
  = \Pi_{\mathcal Y}(A_{sf}x_p^* + A_{ss}y_p^*)$
gives

\vspace{-0.05in}\noindent \begin{align}
  &\Pi_{\mathcal Y}A_{ss}(\bar y_T - y_p^*)
   + \Pi_{\mathcal Y}A_{sf}(\bar x_T - x_p^*)
  \notag\\
  &\qquad= \frac{1}{T}\Bigl(
       \frac{y_0 - y_T}{\beta}
       - \Pi_{\mathcal Y}\bar\psi_T
     \Bigr).
  \label{eq:y_centered}
\end{align}

\vspace{-0.05in}\noindent \emph{$x$-update.}
An analogous telescoping argument for the
$x$-recursion~\eqref{eq:alg} with constant step
size~$\alpha$ gives

\vspace{-0.05in}\noindent 
\begin{align}
  &\Pi_{\mathcal X}A_{ff}
   \bigl(\bar x_T - x_p^*(\bar y_T)\bigr)
   = \frac{1}{T}\Bigl(
       \frac{x_0 - x_T}{\alpha}
       - \Pi_{\mathcal X}\bar\varepsilon_T
     \Bigr),
  \label{eq:x_centered}
\end{align}

\vspace{-0.05in}\noindent where we used the defining property
$\Pi_{\mathcal X}b_1
  = \Pi_{\mathcal X}(A_{ff}x_p^*(\bar y_T)
    + A_{fs}\bar y_T)$.

\noindent\textbf{3.\;Invert the projected operators.}
Since
$\bar x_T - x_p^*(\bar y_T)\in\mathcal X$,
Lemma~\ref{lem:projection_solve} applied
to~\eqref{eq:x_centered} yields

\vspace{-0.05in}\noindent 
\begin{align}
  \bar x_T - x_p^*(\bar y_T)
  &= \frac{c_1}{T}\Bigl(
       \frac{x_0 - x_T}{\alpha}
       - \Pi_{\mathcal X}\bar\varepsilon_T
     \Bigr).
  \label{eq:x_inverted}
\end{align}

\vspace{-0.05in}\noindent Plugging the decomposition
$\bar x_T - x_p^*
  = (\bar x_T - x_p^*(\bar y_T))
    + (x_p^*(\bar y_T) - x_p^*)$
into~\eqref{eq:y_centered}, using
$x_p^*(\bar y_T) - x_p^*
  = c_1 A_{fs}(\bar y_T - y_p^*)$
from~\eqref{eq:xpstar_y}, and rearranging:
\vspace{-0.1in}
\begin{align*}
  \Pi_{\mathcal Y}(A_{ss} + A_{sf}c_1A_{fs})
   (\bar y_T - y_p^*)&= \frac{1}{T}\Bigl(
       \frac{y_0 - y_T}{\beta}
       - \Pi_{\mathcal Y}\bar\psi_T
     \Bigr)\\
  &\quad
     - \Pi_{\mathcal Y}A_{sf}
       (\bar x_T - x_p^*(\bar y_T)).\vspace{-0.05in}
\end{align*}

\vspace{-0.05in}\noindent Applying Lemma~\ref{lem:projection_solve} with
$A = A_{ss} + A_{sf}c_1A_{fs}$
(invertible by Assumption~\ref{as:proj_inv}) and
$c_2 := U_{\mathcal Y}
  (U_{\mathcal Y}^\top
   (A_{ss} + A_{sf}c_1A_{fs})
   U_{\mathcal Y})^{-1}
  U_{\mathcal Y}^\top$
gives
\begin{align}
  \bar y_T - y_p^*
  &= \underbrace{
       \frac{c_2}{T}\Bigl(
         \frac{y_0 - y_T}{\beta}
         - \Pi_{\mathcal Y}\bar\psi_T
       \Bigr)
     }_{=:\,T_1}
  \notag\\
  &\quad
     \underbrace{
       -\,c_2\Pi_{\mathcal Y}A_{sf}
       \frac{c_1}{T}\Bigl(
         \frac{x_0 - x_T}{\alpha}
         - \Pi_{\mathcal X}\bar\varepsilon_T
       \Bigr)
     }_{=:\,T_2},
  \label{eq:y_final}
\end{align}

\vspace{-0.05in}\noindent where we substituted~\eqref{eq:x_inverted} for
$\bar x_T - x_p^*(\bar y_T)$.

\noindent\textbf{4.\;Take norms and expectations for
the $y$-bound.}
From~\eqref{eq:y_final} and
$\|a+b\|^2\le 2(\|a\|^2+\|b\|^2)$:
\begin{align}
  \mathbb E[\|\bar y_T - y_p^*\|^2]
  &\le 2\,\mathbb E[\|T_1\|^2]
       + 2\,\mathbb E[\|T_2\|^2].
  \label{eq:y_two_terms}
\end{align}

\vspace{-0.05in}\emph{Bounding $\mathbb E[\|T_1\|^2]$.}
Expanding $T_1$ and using the fact that the cross term
$\langle c_2(y_0\!-\!y_n)/\beta,\;
  c_2\Pi_{\mathcal Y}\bar\psi_T\rangle$
has zero expectation (by the martingale-difference
property, Assumption~\ref{as:md}):
\begin{align}
  \mathbb E[\|T_1\|^2]
  &\le \frac{1}{T^2}
       \mathbb E\!\Big[
         \Big\|c_2\frac{y_0 - y_T}{\beta}\Big\|^2
       \Big]
       + \frac{1}{T^2}
       \mathbb E\!\bigl[
         \|c_2\Pi_{\mathcal Y}\bar\psi_T\|^2
       \bigr].
  \label{eq:T1_bound}
\end{align}
\vspace{-0.25in}

\noindent By the martingale-difference property,
\vspace{-0.05in}
\begin{align*}
  \mathbb E[\|c_2\Pi_{\mathcal Y}\bar\psi_T\|^2]
  &= \sum_{t=0}^{T-1}
     \mathbb E[\|c_2\Pi_{\mathcal Y}\psi_t\|^2]
     \le \|c_2\|_2^2\,T\,C_\psi.
\end{align*}

\vspace{-0.05in}\noindent Hence the noise contribution in~\eqref{eq:T1_bound}
is $\|c_2\|_2^2\,C_\psi/T = O(1/T)$, while the
boundary term is $O(1/T^2)$ since the numerator
$\mathbb E[\|c_2(y_0-y_n)/\beta\|^2]$ remains
bounded: the projected stability guaranteed by
Assumption~\ref{as:proj_inv} and the step-size
conditions ensure uniformly bounded iterates
\cite{srikant2019finite}.

\emph{Bounding $\mathbb E[\|T_2\|^2]$.}
An identical argument (replacing
$c_2,\,\bar\psi_T,\,y,\,\beta$ by
$c_2\Pi_{\mathcal Y}A_{sf}c_1,\,
  \bar\varepsilon_T,\,x,\,\alpha$)
gives
\vspace{-0.05in}
\begin{align}
  \mathbb E[\|T_2\|^2]
  &\le \frac{1}{T^2}
       \mathbb E\!\Big[
         \Big\|c_2\Pi_{\mathcal Y}A_{sf}c_1
           \frac{x_0 - x_T}{\alpha}\Big\|^2
       \Big]
  \notag\\
  &\quad
       + \frac{\|c_2\Pi_{\mathcal Y}A_{sf}
               c_1\|_2^2\,C_\varepsilon}{T}.
  \label{eq:T2_bound}
\end{align}

\vspace{-0.05in}\noindent \emph{Assembling the $y$-bound.}
Substituting~\eqref{eq:T1_bound}--\eqref{eq:T2_bound}
into~\eqref{eq:y_two_terms} and separating the
$O(1/T)$ and $O(1/T^2)$ contributions:
\vspace{-0.1in}
\begin{align}
  \mathbb E[\|\bar y_T - y_p^*\|^2]
  &\le \frac{L_y}{T} + \frac{D_y}{T^2},
  \label{eq:y_stat_full}
\end{align}

\vspace{-0.05in}\noindent where the leading statistical constant is
\vspace{-0.05in}
\begin{align*}
  L_y
  &= 2\|c_2\|_2^2\,C_\psi
     + 2\|c_2\Pi_{\mathcal Y}A_{sf}
         c_1\|_2^2\,C_\varepsilon\\
  &\le \frac{2\,C_\psi}{m_y^2}
       + \frac{2\|A_{sf}\|_2^2}{m_y^2\,m_x^2}\,C_\varepsilon, \vspace{-0.2in}
\end{align*}\vspace{-0.15in}

The remainder $D_y$ collects $O(1/T^2)$ boundary terms
involving $\mathbb E[\|y_0-y_T\|^2]$ and
$\mathbb E[\|x_0-x_T\|^2]$, which are finite since
the step-size conditions and
Assumption~\ref{as:proj_inv} ensure uniformly bounded
iterates \cite{srikant2019finite}.

\noindent\textbf{5.\;Bound
$\mathbb E\|\bar x_T - x_p^*\|^2$.}
An analogous argument using
$\Pi_{\mathcal X}b_1
  = \Pi_{\mathcal X}(A_{ff}x_p^* + A_{fs}y_p^*)$
and Lemma~\ref{lem:projection_solve} gives
\vspace{-0.05in}\noindent 
\begin{align*}
  \bar x_T - x_p^*
  &= \frac{c_1}{T}\Bigl(
       \frac{x_0 - x_T}{\alpha}
       - \Pi_{\mathcal X}\bar\varepsilon_T
     \Bigr)
     - c_1\Pi_{\mathcal X}A_{fs}
       (\bar y_T - y_p^*).
\end{align*}

\vspace{-0.05in}\noindent Applying $\|a+b\|^2\le 2\|a\|^2+2\|b\|^2$,
using $\mathbb E[\|\bar\varepsilon_T\|^2]
  \le T\,C_\varepsilon$,
and substituting~\eqref{eq:y_stat_full} for the
coupling term yields
$\mathbb E[\|\bar x_T - x_p^*\|^2]
  \le L_x/T + D_x/T^2$,
where
\vspace{-0.1in}
\begin{align*}
  L_x
  &\le \frac{2\|A_{fs}\|_2^2}
            {m_x^2\,m_y^2}\,C_\psi
     + \left(
       \frac{4\|A_{fs}\|_2^2\|A_{sf}\|_2^2}
            {m_y^2\,m_x^4}
       + \frac{1}{m_x^2}
     \right)C_\varepsilon,
\end{align*}

\vspace{-0.05in}\noindent and $D_x$ collects the $O(1/T^2)$ terms.
This completes Part~1.

\noindent\textbf{Part~2: Approximation error.}
We compare the constrained solution $(x_p^*,y_p^*)$ to the
unconstrained solution~$(x^*,y^*)$.

\noindent\textbf{1.\;Decompose into projection mismatch
and coupling.}
\vspace{-0.1in}
\begin{align*}
  x_p^*\!-\!x^*
  &= (x_p^*\!-\!\Pi_{\mathcal X}x^*)
     + (\Pi_{\mathcal X}x^*\!-\!x^*),\\
  y_p^*\!-\!y^*
  &= (y_p^*\!-\!\Pi_{\mathcal Y}y^*)
     + (\Pi_{\mathcal Y}y^*\!-\!y^*),
\end{align*}

\vspace{-0.05in}\noindent where $\Pi_{\mathcal X}x^*-x^*$ and
$\Pi_{\mathcal Y}y^*-y^*$ are the best-approximation
errors onto the feasible subspaces.

\noindent\textbf{2.\;Solve the projected error equations
and eliminate cross-terms.}
Subtracting the projected $x$-equation evaluated at
$(x^*,y^*)$ from that at $(x_p^*,y_p^*)$, applying
Lemma~\ref{lem:projection_solve} with $A=I-A_{ff}$
($c_3$ as in~\eqref{eq:c3c4}), and combining with the
best-approximation error gives
$x_p^*-x^*
  = (I+c_3A_{ff})(\Pi_{\mathcal X}x^*-x^*)
    + c_3A_{fs}(y_p^*-y^*)$.
A symmetric argument with $c_4$ from~\eqref{eq:c3c4}
gives
$y_p^*-y^*
  = (I+c_4A_{ss})(\Pi_{\mathcal Y}y^*-y^*)
    + c_4A_{sf}(x_p^*-x^*)$.
Substituting the second into the first and rearranging:
\vspace{-0.05in}
\begin{align*}
  &x_p^*-x^*
  = (I\!-\!c_3A_{fs}c_4A_{sf})^{-1}
     (I\!+\!c_3A_{ff})
     (\Pi_{\mathcal X}x^*\!-\!x^*)\\
  &\qquad
     + (I\!-\!c_3A_{fs}c_4A_{sf})^{-1}
     c_3A_{fs}(I\!+\!c_4A_{ss})
     (\Pi_{\mathcal Y}y^*\!-\!y^*).
\end{align*}
\vspace{-0.2in}

\noindent Taking norms and using
$\|a+b\|^2\le 2\|a\|^2+2\|b\|^2$,
\vspace{-0.1in}
\begin{align*}
  &\|x_p^*-x^*\|^2
  \le 2\kappa_{xy}^2
     \!\left(1+\frac{\|A_{ff}\|_2}{\mu_x}\right)^{\!2}
     \|\Pi_{\mathcal X}x^*\!-\!x^*\|^2\\
  &\qquad
     + 2\kappa_{xy}^2
     \!\left(\frac{\|A_{fs}\|_2}{\mu_x}\right)^{\!2}
     \!\left(1+\frac{\|A_{ss}\|_2}{\mu_y}\right)^{\!2}
     \|\Pi_{\mathcal Y}y^*\!-\!y^*\|^2,
\end{align*}
\vspace{-0.15in}

\noindent which identifies $B_{xx}$ and $B_{xy}$.
Analogously, eliminating $x_p^*-x^*$ gives
\vspace{-0.05in}
\begin{align*}
  &\|y_p^*-y^*\|^2
  \le 2\kappa_{yx}^2
     \!\left(1+\frac{\|A_{ss}\|_2}{\mu_y}\right)^{\!2}
     \|\Pi_{\mathcal Y}y^*\!-\!y^*\|^2\\
  &\qquad
     + 2\kappa_{yx}^2
     \!\left(\frac{\|A_{sf}\|_2}{\mu_y}\right)^{\!2}
     \!\left(1+\frac{\|A_{ff}\|_2}{\mu_x}\right)^{\!2}
     \|\Pi_{\mathcal X}x^*\!-\!x^*\|^2,
\end{align*}

\vspace{-0.05in}\noindent which identifies $B_{yx}$ and $B_{yy}$.

\noindent\textbf{Conclusion.}
Combining Part~1 (statistical terms) and Part~2
(approximation terms) with the
decomposition~\eqref{eq:decomp}
from Step~1 yields the claimed bounds in
Theorem~\ref{thm:main}.
\hfill$\square$
\vspace{-0.05in}





\vspace{-0.05in}
\bibliographystyle{IEEEtran}
\bibliography{references}

\end{document}